\documentclass[letterpaper, 10 pt, conference]{ieeeconf}  % Comment this line out if you need a4paper

\IEEEoverridecommandlockouts                              % This command is only needed if 
\usepackage{graphicx}
\usepackage{subcaption}

\let\labelindent\relax
\usepackage{sty/nlab-settings}
\usepackage{sty/nlab-commands}

\usepackage{sty/nlab-theorems}

\usepackage{mathtools}
\mathtoolsset{showonlyrefs}
\title{\LARGE \bf
Optimization-Free Constrained Control with Guaranteed Recursive Feasibility: A CBF-Based Reference Governor Approach
}

\author{Satoshi Nakano$^{1}$, Emanuele Garone$^{2}$, and Gennaro Notomista$^{3}$% <-this % stops a space
\thanks{*This work was supported by JSPS KAKENHI Grant Numbers JP23K13352.}% <-this % stops a space
\thanks{$^{1}$Satoshi Nakano is with the Department of Engineering,
        Nagoya Institute of Technology, 466-8555 Aichi, Japan
        {\tt\small nakano@nitech.ac.jp}}%
\thanks{$^{2}$Emanuele Garone is with the Department of Control Engineering and System Analysis,
        Universite Libre de Bruxelles, 1050 Brussels, Belgium
        {\tt\small egarone@ulb.ac.be}}%
\thanks{$^{3}$Gennaro Notomista is with the Department of Electrical and Computer Engineering, University of Waterloo,
        ON N2L 3G1, Canada
        {\tt\small gennaro.notomista@uwaterloo.ca}}%
}

\begin{document}

\maketitle
\thispagestyle{empty}
\pagestyle{empty}

%%%%%%%%%%%%%%%%%%%%%%%%%%%%%%%%%%%%%%%%%%%%%%%%%%%%%%%%%%%%%%%%%%%%%%%%%%%%%%%%
\begin{abstract}
    This letter presents a constrained control framework that integrates Explicit Reference Governors (ERG) with Control Barrier Functions (CBF) to ensure recursive feasibility without online optimization.
    We formulate the reference update as a virtual control input for an augmented system, governed by a smooth barrier function constructed from the softmin aggregation of Dynamic Safety Margins (DSMs).
    Unlike standard CBF formulations, the proposed method guarantees the feasibility of safety constraints by design, exploiting the forward invariance properties of the underlying Lyapunov level sets.
    This allows for the derivation of an explicit, closed-form reference update law that strictly enforces safety while minimizing deviation from a nominal reference trajectory.
    Theoretical results confirm asymptotic convergence, and numerical simulations demonstrate that the proposed method achieves performance comparable to traditional ERG frameworks.
\end{abstract}

\section{INTRODUCTION}
Safety-critical control has emerged as a fundamental requirement in autonomous systems, where state and input constraints must be strictly satisfied during operation \cite{amesControlBarrierFunctions2019}.
Among various approaches, Control Barrier Functions (CBFs) have gained significant popularity due to their ability to enforce forward invariance of safe sets via low-complexity constraints on the control input \cite{amesControlBarrierFunctions2019,9131819}.
Typically, CBFs are incorporated into a Quadratic Program (QP) that acts as a safety filter for a nominal controller.
However, a major theoretical and practical limitation of standard CBF-QP-based methods is the issue of feasibility.
Since the control input is computed myopically, there is no inherent guarantee that the QP remains feasible for all time, especially when the system is subject to tight input bounds or when multiple barrier constraints conflict.
Although various techniques have been proposed to enhance feasibility, such as backup sets or adaptive coefficients, they often introduce additional computational complexity or require heuristic tuning \cite{chen2021,xiao2022}.

An alternative strategy for constrained control is the Reference Governor (RG) framework, which manages constraints by modifying the reference signal supplied to a pre-stabilized closed-loop system \cite{garone2017}.
The Explicit Reference Governor (ERG) \cite{7244340,8412335} is a computationally efficient formulation of RGs that avoids online optimization by utilizing the concept of Dynamic Safety Margin (DSM).
ERG has been applied to a variety of applications \cite{nakano2023,tartaglione2024,goldar2021}.
The DSM is a scalar value derived from a Lyapunov function that quantifies the available margin before constraint violation occurs.
By modulating the speed of the reference dynamics based on the DSM, the ERG ensures that the system state never leaves a pre-calculated invariant set, thereby guaranteeing recursive feasibility by design.
However, traditional ERG schemes typically employ a static field to update the reference.
Designing this field to prevent stagnation near constraints in complex environments is non-trivial and often requires heuristic design of repulsion terms.

In this letter, we propose a novel ERG formulation that integrates the recursive feasibility guarantees of the ERG with the CBF-guided reference update.
Specifically, we explicitly formulate the time derivative of the auxiliary reference as a virtual control input, which enables the direct application of CBF-based design to the reference dynamics.
We construct a unified barrier function by aggregating the DSMs of all constraints and the steady-state admissibility conditions using a softmin function \cite{rabiee2024a}.
This formulation allows us to derive a reference update law based on the CBF condition for the augmented system.
Crucially, because the underlying safety quantification relies on the DSM level sets, which are forward invariant for the nominal controller, the resulting constraints on the reference update are always feasible (e.g., by maintaining a constant reference).
We provide a closed-form solution for the reference update that minimizes the deviation from a nominal gradient flow while strictly enforcing safety.

The remainder of this letter is organized as follows.
Sections \ref{sec:problem} and \ref{sec:preliminaries} formulate the control problem and introduce the necessary preliminaries, respectively.
Section \ref{sec:reference_governor_design} details the proposed ERG-CBF framework, including the derivation of the closed-form reference update law.
Section \ref{sec:sim} presents numerical simulations, and Section \ref{sec:conclusion} concludes the letter.

\section{PROBLEM FORMULATION}
\label{sec:problem}
Consider the nonlinear control system given by
\begin{align}
    \dot{x}(t) = f(x(t),u(t)), \label{eq:system}
\end{align}
where $x(t) \in \nlR^n$ is the state and $u(t) \in \nlR^m$ is the control input.
Let $r \in \nlR^\ell$ denote the constant desired reference.
We assume that a nominal controller
\begin{align}
    u(t) = \kappa(x(t), g(t))
    \label{eq:controller}
\end{align}
has been predesigned, where $g(t) \in \nlR^\ell$ is an auxiliary reference.
We assume that $f$ and the prestabilizing controller $\kappa$ are $C^1$.
For any constant reference $g(t)\equiv\bar g$, the closed-loop system admits a unique asymptotically stable equilibrium $x_g \in \nlR^n$ satisfying $f(x_g,\kappa(x_g,\bar g))=0$, and the mapping $g \mapsto x_g$ is continuously differentiable.
That is, for any initial condition in the region of attraction,
\begin{align}
    \lim_{t \to \infty} \|x(t) - x_g\| = 0.
\end{align}
The system is subject to $n_c$ constraints defined by
\begin{align}
    h_i(x(t), g(t)) \ge 0, \quad i = 1, \dots, n_c,
\end{align}
where $h_i : \nlR^n \times \nlR^\ell \to \nlR$ are continuously differentiable functions.
These constraints may represent physical limitations on the state, actuator saturation (via the dependence on $g$ in the control law), or safety requirements.

The control objective is to design a reference governor that manipulates the auxiliary reference $g(t)$ to steer the state $x(t)$ to the equilibrium $x_r$ associated with $r$, while ensuring that the constraints $h_i(x(t), g(t)) \ge 0$ are satisfied for all $t \ge 0$.

\section{PRELIMINARIES}
\label{sec:preliminaries}

\subsection{Control Barrier Functions and Softmin Aggregation}

Consider a continuously differentiable function $h : \nlR^n \to \nlR$ and the associated safe set $\mathcal{S} \coloneqq \{ x \in \nlR^n \mid h(x) \ge 0 \}$.
The set $\mathcal{S}$ is said to be \emph{forward invariant} for the closed-loop system if $x(0)\in\mathcal{S}$ implies $x(t)\in\mathcal{S}$ for all $t\ge0$.
The function $h$ is a \emph{Control Barrier Function} (CBF) if there exists an extended class-$\mathcal{K}$ function $\alpha$ such that
\begin{align}
    \dot h(x(t)) + \alpha(h(x(t))) \ge 0
    \label{eq:CBF_continuous}
\end{align}
holds along the trajectories of the closed-loop system~\cite{amesControlBarrierFunctions2019,9131819}.
For the closed-loop system $\dot x = f(x,\kappa(x,g))$, the CBF condition reads
\begin{align}
    \nabla_x h(x)^\nlT f(x,\kappa(x,g)) + \alpha(h(x)) \ge 0.
    \label{eq:CBF_condition}
\end{align}
If \eqref{eq:CBF_condition} holds for all $t\ge0$, then $\mathcal{S}$ is forward invariant.

When multiple constraints $h_i(x)\ge0$, $i\in\mathcal{I}$, are present, the safe set is the intersection $\mathcal{S}=\bigcap_{i\in\mathcal I}\{x\mid h_i(x)\ge0\}$.
To obtain a smooth inner approximation of this intersection, we aggregate the individual barrier candidates via the \emph{soft minimum} (log-sum-exp) \cite{rabiee2024a,rabiee2025}.
Define the softmin operator for a collection $\{s_i\}$ and a parameter $\beta>0$ by
\begin{align}
    \operatorname{softmin}_\beta\{s_i\}
    \coloneqq
    -\frac{1}{\beta}\log\!\Big(\sum_i e^{-\beta s_i}\Big).
\end{align}
We then set, with the aggregation parameter $\beta_H>0$,
\begin{align}
    H(x) \coloneqq \operatorname{softmin}_{\beta_H}\{h_i(x)\}_{i\in\mathcal I}.
\end{align}
The function $H$ is $C^\infty$ whenever the $h_i$ are present and satisfies
$\min_i h_i(x) - \frac{1}{\beta_H}\log n_c \le H(x) \le \min_i h_i(x)$,
with $\lim_{\beta_H\to\infty} H(x)=\min_i h_i(x)$.
Since $H$ is continuously differentiable whenever the $h_i$ are, it can be treated as a candidate CBF.
Imposing the single CBF condition $\dot H(x)+\alpha_H(H(x))\ge0$ guarantees forward invariance of the inner approximation $\{x\mid H(x)\ge0\}$ and hence enforces all $h_i(x)\ge0$.
In the sequel, this softmin construction will be naturally extended to functions depending on both the state $x$ and the auxiliary reference $g$.

\subsection{Dynamic Safety Margins}
Dynamic Safety Margins (DSMs) provide a Lyapunov-based measure of the distance to constraint violation for prestabilized systems and play a central role in the Explicit Reference Governor (ERG)~\cite{8412335,11006078,freire2025a}.
In this work, we employ the DSM to ensure the transient safety of the closed-loop trajectories, independent of the steady-state admissibility conditions that will be introduced later.

\begin{defn}[Reference-dependent Lyapunov function~{\cite{11006078,freire2025a}}]\label{def:ref_dep_V}
    For a constant reference $g \in \nlR^\ell$, let $x_g$ denote the corresponding equilibrium of the closed-loop system under the controller $\kappa(x,g)$.
    A continuously differentiable function $V : \nlR^n \times \nlR^\ell \to \nlR$ is called a \emph{reference-dependent Lyapunov function} if, for each $g$, there exists a neighborhood $D_g \subset \nlR^n$ of $x_g$ such that
    \begin{subequations}
        \begin{align}
            V(x_g, g)                                         & = 0,                                              \\
            V(x, g)                                           & > 0, \quad \forall x \in D_g \setminus \{x_g\}, \\
            \frac{\partial V}{\partial x}(x,g)   f(x,\kappa(x,g)) & \le 0, \quad \forall x \in D_g.
        \end{align}
    \end{subequations}
\end{defn}

We assume that for any fixed $g$, $V(x,g)$ is radially unbounded with respect to $x$, ensuring that the sublevel sets $\{x \mid V(x,g) \le c\}$ are compact.
For each constraint $h_i$, we define the constraint-wise admissible region $\mathcal{S}_{i,g}$ and its complement $\mathcal{S}_{i,g}^c$ within the domain $D_g$, and the associated thresholds
\begin{align}
    \Gamma_i^\ast(g) &\coloneqq \inf_{x \in \mathcal{S}_{i,g}^c} V(x,g) - \inf_{x \in \mathcal{S}_{i,g}} V(x,g), \\
    \bar{\Gamma}(g)  &\coloneqq \inf_{x \in \partial D_g} V(x,g).
\end{align}

Following the standard DSM construction~\cite{11006078,freire2025a}, define the candidate margins
\begin{align}
    m_1(x,g) &\coloneqq \Gamma_i^\ast(g) - V(x,g), \\
    m_2(x,g) &\coloneqq (1-\epsilon)\bar{\Gamma}(g) - V(x,g),
\end{align}
where $\epsilon \in (0, 1)$ is a small margin parameter and $m_2$ is relevant only when $V$ is valid in a restricted region of attraction.\footnote{If $V$ is globally valid, the stability margin is unnecessary and one may simply set $\Delta_i(x,g)=m_1(x,g)$.}

\begin{defn}[Dynamic Safety Margin for constraint $i$]\label{def:dsm}
For $\beta_\Delta>0$, the smooth DSM is defined by
\begin{align}
    \Delta_i(x,g)
    \coloneqq
    \operatorname{softmin}_{\beta_\Delta}\{m_1(x,g),\,m_2(x,g)\}.
\end{align}
\end{defn}

For finite $\beta_\Delta$, $\Delta_i$ is $C^\infty$ whenever $V$ is $C^\infty$, and $\lim_{\beta_\Delta\to\infty}\Delta_i(x,g)=\min\{m_1,m_2\}$ pointwise; hence, the classical DSM is recovered in the limit.
Note that both $m_1$ and $m_2$ depend on $x$ only through $-V(x,g)$; hence, the $x$-gradient of the smooth DSM satisfies
\begin{align}
    \nabla_x \Delta_i(x,g) \;=\; -\nabla_x V(x,g),
\end{align}
so $\nabla_x \Delta_i(x,g)^\nlT f(x,\kappa(x,g)) = -\nabla_x V(x,g)^\nlT f(x,\kappa(x,g)) \ge 0$, which is utilized in subsequent proofs.

By definition, the condition $\Delta_i(x,g) \ge 0$ implies that $V(x,g) \le \Gamma_i^\ast(g)$ (and within the stability region), which guarantees that $x$ remains in the safe set $\mathcal{S}_{i,g}$.
Since $\dot{V}(x,g) \le 0$ for a constant $g$, each set $\mathcal{C}_i \coloneqq \{(x,g) \mid \Delta_i(x,g) \ge 0\}$ is forward invariant under the nominal controller $\kappa(x,g)$.

\section{REFERENCE GOVERNOR DESIGN}
\label{sec:reference_governor_design}
In this section, we propose a reference update scheme that steers the applied reference $g$ toward the target $r$ while guaranteeing constraint satisfaction.
Building upon the framework of~\cite{11006078}, we treat the time derivative of the reference as a virtual control input $\rho \in \nlR^\ell$, i.e., $\dot{g} = \rho$.
Consequently, the augmented closed-loop dynamics are given by
\begin{equation}
    \label{eq:augmented_system}
    \begin{cases}
        \dot{x} & = f(x, \kappa(x,g)), \\
        \dot{g} & = \rho.
    \end{cases}
\end{equation}
The objective is to design $\rho$ such that the constraints are satisfied and $g$ converges to $r$.

\subsection{Steady-State Admissibility}
In addition to transient safety, the reference governor must ensure that the applied reference $g$ remains \emph{steady-state admissible}, meaning that the corresponding equilibrium $x_g$ satisfies the constraints.
\begin{defn}[Steady-state admissible reference]
    The set of steady-state admissible references is defined as
    \begin{align}
        \mathcal{V} \coloneqq \{ g \in \nlR^\ell \mid h_i(x_g, g) \ge 0, \quad \forall i \in \mathcal{I} \}.
    \end{align}
\end{defn}
To design the reference update law, we distinguish between constraints that are automatically satisfied at steady state and those that impose restrictions on $g$.
Let $\mathcal{I}_{\mathrm{SA}} \subseteq \mathcal{I}$ be the set of indices such that for all $i \in \mathcal{I}_{\mathrm{SA}}$, the condition $h_i(x_g, g) \ge 0$ holds for all $g \in \nlR^\ell$.
The complementary set $\mathcal{I}_{\mathrm{SA}}^c \coloneqq \mathcal{I} \setminus \mathcal{I}_{\mathrm{SA}}$ contains the indices of constraints that explicitly restrict the admissible reference set $\mathcal{V}$.
Consequently, the proposed reference governor must update $g$ such that:
\begin{enumerate}
    \item The DSM condition $\Delta_i(x,g) \ge 0$ holds for all $i \in \mathcal{I}$ (ensuring transient safety).
    \item The steady-state condition $h_j(x_g, g) \ge 0$ holds for all $j \in \mathcal{I}_{\mathrm{SA}}^c$ (ensuring $g \in \mathcal{V}$).
\end{enumerate}
These requirements are summarized as:
\begin{subequations}
    \label{eq:constraints-erg}
    \begin{align}
        \Delta_i(x,g) & \ge 0, &  & \forall i \in \mathcal{I}, \label{eq:constraints-dsm}                \\
        h_j(x_g,g)    & \ge 0, &  & \forall j \in \mathcal{I}_{\mathrm{SA}}^c. \label{eq:constraints-sa}
    \end{align}
\end{subequations}

\subsection{Gradient-Based Reference Update}
To drive the reference $g$ toward the target $r$, we consider a nominal gradient flow
\begin{align}
    \dot{g} = - \nabla_g V_g(g,r),
\end{align}
where $V_g : \nlR^\ell \times \nlR^\ell \to \nlR_{\ge 0}$ is a potential function satisfying $V_g(g,r) > 0$ for $g \ne r$ and $V_g(r,r) = 0$.
However, this update must be modified to respect the safety constraints \eqref{eq:constraints-erg}.
To handle these constraints in a unified manner, we define the softmin-composed barrier function:
\begin{align}
    H(x,g)
     & \coloneqq \operatorname{softmin}_{\beta_H} \Big( \{ \Delta_i(x,g) \}_{i \in \mathcal{I}} \cup \{ h_j(x_g,g) \}_{j \in \mathcal{I}_{\mathrm{SA}}^c} \Big) \nonumber \\
     & \!=\!
    -\!\frac{1}{\beta_H}
    \log \Bigg(\!
    \sum_{i \in \mathcal{I}}
    e^{-\beta_H \Delta_i(x,g)}
    \!+\!
    \sum_{j \in \mathcal{I}_{\mathrm{SA}}^c}
    e^{-\beta_H h_j(x_g,g)}
    \!\Bigg).
    \label{eq:H_softmin_alt}
\end{align}
Since each aggregated term is lower bounded by $H(x,g)$, the condition $H(x,g)\ge0$ implies \eqref{eq:constraints-erg}.
For finite $\beta_H$, the converse implication may not hold, which yields a conservative inner approximation of the simultaneous constraints.
Therefore, $H(x,g)\ge0$ serves as a smooth sufficient condition for the simultaneous satisfaction of all safety requirements.
Whenever $H$ is differentiable, we can enforce the forward invariance of $\{(x,g)\mid H(x,g)\ge0\}$ using the CBF condition for the augmented system \eqref{eq:augmented_system}:
\begin{align}
    \dot{H}(x,g) + \alpha_H(H(x,g)) \ge 0,
    \label{eq:H_CBF_condition}
\end{align}
where $\alpha_H$ is a class-$\mathcal{K}$ function.
Expanding the time derivative $\dot{H} = \frac{\partial H}{\partial x} \dot{x} + \frac{\partial H}{\partial g} \dot{g}$ yields the affine constraint on the virtual input $\rho$:
\begin{align}
    a_H(x,g)^\nlT \rho \le b_H(x,g),
    \label{eq:H_linear_constraint}
\end{align}
where $a_H(x,g) \coloneqq -\nabla_g H(x,g)$ and $b_H(x,g) \coloneqq \nabla_x H(x,g)^\nlT f(x,\kappa(x,g)) + \alpha_H(H(x,g))$.
The proposed reference update law is obtained by solving the following Quadratic Program (QP), which minimizes the deviation from the nominal direction subject to the safety constraint:
\begin{align}
    \rho^\star(x,g) = \arg \min_{\rho \in \nlR^\ell}
    \|\rho + \nabla_g V_g(g,r)\|^2
    \quad
    \text{s.t.}\quad
    a_H^\nlT \rho \le b_H.
    \label{eq:QP_reference_update}
\end{align}
The explicit solution is given by
\begin{align}
    \dot g = \rho^\star \coloneqq
    \begin{cases}
        - \nabla_g V_g(g,r)
        -
        \frac{
            \max \big\{0,
            a_H^\nlT \big(-\nabla_g V_g(g,r)\big)
            - b_H
            \big\}
        }{
            \|a_H\|^2
        } a_H,
        \\[1mm]
        \hspace{8mm}\text{if } \|a_H\|^2 > 0, \\[2mm]
        - \nabla_g V_g(g,r),
        \qquad\text{if } \|a_H\|^2 = 0.
    \end{cases}
    \label{eq:QP_solution}
\end{align}
This reference update law ensures that the reference $g$ evolves as close as possible to the nominal gradient descent direction while strictly satisfying the safety condition $H(x,g) \ge 0$.

\subsection{Safety and Convergence Analysis}
We now establish the safety and convergence properties of the proposed closed-loop system.
First, we confirm that the reference update optimization is feasible for any state in the safe set.

\begin{prop}[Feasibility of the trivial update]
    \label{prop:feasibility}
    For any $(x,g)$ such that $H(x,g) \ge 0$, the choice $\rho = 0$ satisfies the CBF constraint \eqref{eq:H_linear_constraint}, i.e., $b_H(x,g) \ge 0$.
\end{prop}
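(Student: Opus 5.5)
We need to show $b_H(x,g) = \nabla_x H^\nlT f(x,\kappa(x,g)) + \alpha_H(H(x,g)) \ge 0$ whenever $H(x,g)\ge 0$. The plan is to show each of the two terms is nonnegative separately.

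\emph{The $\alpha_H$ term.} This is immediate. $\alpha_H$ is class-$\mathcal{K}$ and $H(x,g)\ge0$, so $\alpha_H(H(x,g))\ge 0$.

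\emph{The drift term.} We compute $\nabla_x H$ from the softmin formula \eqref{eq:H_softmin_alt}. Differentiating the log-sum-exp gives a convex combination of the gradients of the aggregated terms:
\begin{align}
    \nabla_x H(x,g) = \sum_{i\in\mathcal I} \lambda_i(x,g)\,\nabla_x \Delta_i(x,g) + \sum_{j\in\mathcal I_{\mathrm{SA}}^c} \mu_j(x,g)\,\nabla_x\big[h_j(x_g,g)\big].
\end{align}
The weights are
\begin{align}
    \lambda_i = \frac{e^{-\beta_H\Delta_i}}{Z}, \qquad \mu_j = \frac{e^{-\beta_H h_j(x_g,g)}}{Z},
\end{align}
where $Z$ is the sum of all exponentials, so all weights are positive and sum to one.

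The key observation is that the steady-state terms $h_j(x_g,g)$ depend only on $g$, because $x_g$ is a function of $g$ alone. Hence $\nabla_x[h_j(x_g,g)]=0$. We then invoke the identity established after Definition~\ref{def:dsm}, namely $\nabla_x\Delta_i = -\nabla_x V$. This gives
\begin{align}
    \nabla_x H^\nlT f(x,\kappa(x,g)) = -\Big(\sum_i\lambda_i\Big)\,\nabla_x V(x,g)^\nlT f(x,\kappa(x,g)) \ge 0,
\end{align}
by the Lyapunov decrease property in Definition~\ref{def:ref_dep_V}. Adding the two nonnegative terms gives $b_H\ge0$, so $\rho=0$ satisfies $a_H^\nlT\rho = 0\le b_H$.

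\emph{Main obstacle.} The delicate step is justifying that the decrease condition $\partial_x V f\le 0$ is available at the given $x$. Definition~\ref{def:ref_dep_V} only guarantees it on $D_g$, so we must show $x\in D_g$. For this we use $H\ge 0 \Rightarrow \Delta_i\ge 0 \Rightarrow m_2\ge 0$ (softmin lower-bounds each argument), that is,
\begin{align}
    V(x,g)\le(1-\epsilon)\bar\Gamma(g) < \inf_{\partial D_g}V .
\end{align}
A continuity argument then places $x$ in the sublevel set contained in $D_g$: the connected component of this sublevel set containing $x_g$ cannot reach $\partial D_g$. In the globally valid case this step is unnecessary. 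I would state this membership explicitly, possibly noting an implicit assumption that the relevant sublevel set is connected or lies in $D_g$.
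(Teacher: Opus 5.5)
Your proof is correct and follows the paper's argument essentially step for step. The shared steps are the softmin convex-combination formula for $\nabla_x H$, the vanishing of $\nabla_x h_j(x_g,g)$, the identity $\nabla_x\Delta_i=-\nabla_x V$ combined with the Lyapunov decrease, and $\alpha_H(H)\ge0$. Your extra remark that one needs $x\in D_g$ (obtained via $m_2\ge0$) addresses something the paper's proof silently assumes when it invokes $\dot V\le 0$, so stating it as an implicit assumption is appropriate rather than a defect.
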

\begin{proof}
    Let $S(x,g) \coloneqq \sum_{i \in \mathcal{I}} e^{-\beta_H \Delta_i(x,g)} + \sum_{j \in \mathcal{I}_{\mathrm{SA}}^c} e^{-\beta_H h_j(x_g,g)}$.
    Since each exponential term is strictly positive, $S(x,g)>0$ holds, hence the weights are well-defined.
    Set $w_{\Delta_i}(x,g) \coloneqq e^{-\beta_H \Delta_i(x,g)}/S(x,g)$ and $w_{h_j}(x,g) \coloneqq e^{-\beta_H h_j(x_g,g)}/S(x,g)$.
    Then $w_{\Delta_i}(x,g)>0$ and $w_{h_j}(x,g)>0$ hold, and $\sum_{i \in \mathcal{I}} w_{\Delta_i}(x,g) + \sum_{j \in \mathcal{I}_{\mathrm{SA}}^c} w_{h_j}(x,g) = 1$.
    Moreover, the chain rule gives the convex-combination form
    \begin{align}
        \nabla_x H(x,g)
         & =
        \sum_{i \in \mathcal{I}} w_{\Delta_i}(x,g)\nabla_x \Delta_i(x,g) \notag \\
         & +
        \sum_{j \in \mathcal{I}_{\mathrm{SA}}^c} w_{h_j}(x,g)\nabla_x h_j(x_g,g).
        \label{eq:Hx_convex_combination}
    \end{align}
    For each DSM term, we established that $\nabla_x \Delta_i(x,g) = -\nabla_x V(x,g)$.
    Therefore, along the closed-loop dynamics with $\rho=0$, we obtain $\nabla_x \Delta_i(x,g)^\nlT f(x,\kappa(x,g)) = -\nabla_x V(x,g)^\nlT f(x,\kappa(x,g)) \ge 0$,
    where the inequality follows from the reference-dependent Lyapunov property $\dot V(x,g)\le0$ for constant $g$.
    Note that $\nabla_x h_j(x_g,g)=0$ since the dependence on $x$ is through the equilibrium $x_g(g)$.
    Combining these facts with \eqref{eq:Hx_convex_combination} yields $\nabla_x H(x,g)^\nlT f(x,\kappa(x,g)) \ge 0$.
    Since $\alpha_H(H(x,g))\ge0$ holds whenever $H(x,g)\ge0$, we conclude that $b_H(x,g)\ge0$ holds, and thus $\rho=0$ is feasible.
\end{proof}

\begin{lem}
    \label{lem:proj_grad}
    The optimizer $\rho^\star(x,g)$ of \eqref{eq:QP_reference_update} satisfies:
    \begin{align}
        \nabla_g V_g(g,r)^\nlT \rho^\star(x,g) \le - \|\rho^\star(x,g)\|^2.
        \label{eq:proj_grad_ineq}
    \end{align}
\end{lem}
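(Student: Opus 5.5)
We use the fact that $\rho^\star$ is the Euclidean projection of the nominal direction $\rho_0 \coloneqq -\nabla_g V_g(g,r)$ onto the closed convex set $\mathcal{K} \coloneqq \{\rho \mid a_H^\nlT \rho \le b_H\}$. By Proposition~\ref{prop:feasibility}, $0 \in \mathcal{K}$ whenever $H(x,g)\ge0$, so $\mathcal{K}$ is a nonempty half-space. The claim follows from the variational inequality characterizing projections, applied with the test point $\rho=0$.

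The key step is the projection inequality
\begin{align}
    (\rho_0 - \rho^\star)^\nlT(\rho - \rho^\star) \le 0, \quad \forall \rho \in \mathcal{K}.
\end{align}
It is the first-order optimality condition of the strictly convex problem \eqref{eq:QP_reference_update}. Taking $\rho = 0 \in \mathcal{K}$ gives $-(\rho_0-\rho^\star)^\nlT\rho^\star \le 0$, i.e., $\rho_0^\nlT \rho^\star \ge \|\rho^\star\|^2$. Substituting $\rho_0 = -\nabla_g V_g(g,r)$ yields exactly \eqref{eq:proj_grad_ineq}.

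As a cross-check that does not rely on the abstract projection argument, we also verify the bound directly from the closed form \eqref{eq:QP_solution}, in three cases.
\begin{itemize}
    \item If $\|a_H\|=0$, or if the $\max$ in \eqref{eq:QP_solution} equals $0$, then $\rho^\star=\rho_0$. Hence $\nabla_g V_g^\nlT\rho^\star = -\|\rho^\star\|^2$, and \eqref{eq:proj_grad_ineq} holds with equality.
    \item Otherwise, write $\lambda \coloneqq (a_H^\nlT\rho_0 - b_H)/\|a_H\|^2 > 0$, so that $\rho^\star = \rho_0 - \lambda a_H$ and $a_H^\nlT\rho^\star = b_H$. Then
    \begin{align}
        \rho_0^\nlT\rho^\star - \|\rho^\star\|^2 = (\rho_0-\rho^\star)^\nlT\rho^\star = \lambda\, a_H^\nlT\rho^\star = \lambda\, b_H \ge 0,
    \end{align}
    where $b_H\ge0$ is again Proposition~\ref{prop:feasibility}.
\end{itemize}

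The only delicate point is this reliance on $b_H \ge 0$, that is, on $0$ being feasible. Proposition~\ref{prop:feasibility} guarantees it only when $H(x,g)\ge0$. The lemma should therefore be read on the safe set, which is where the closed-loop trajectories remain by the CBF argument. Outside that set, with $b_H<0$, the inequality can fail, so I would state this restriction explicitly.
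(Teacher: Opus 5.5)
Your proof is correct and matches the paper's argument: you apply the projection variational inequality at the feasible test point $\rho=0$ supplied by Proposition~\ref{prop:feasibility}. Your closed-form cross-check is also correct, and you are right that the lemma implicitly requires $H(x,g)\ge0$ (so that $b_H\ge0$), a restriction the paper leaves unstated.
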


\begin{proof}
    The QP solution $\rho^\star$ is the projection of $-\nabla_g V_g$ onto the convex set $\mathcal{H} = \{\rho \mid a_H^\nlT \rho \le b_H\}$.
    By the variational inequality of the projection, we have $(\rho^\star - (-\nabla_g V_g))^\nlT (\rho - \rho^\star) \ge 0$ for all $\rho \in \mathcal{H}$.
    Since $\rho = 0 \in \mathcal{H}$ (as shown in Proposition~\ref{prop:feasibility}), substituting $\rho=0$ yields
    $(\rho^\star + \nabla_g V_g)^\nlT (-\rho^\star) \ge 0$, which implies $- \|\rho^\star\|^2 - \nabla_g V_g^\nlT \rho^\star \ge 0$.
    Rearranging terms gives \eqref{eq:proj_grad_ineq}.
\end{proof}

\begin{lem}
    \label{lem:plant_bound}
    Let $V(x,g)$ be the reference-dependent Lyapunov function.
    Assume there exist constants $c_x, L_{xg} > 0$ such that:
    \begin{align}
        \frac{\partial V}{\partial x} f(x,\kappa(x,g)) & \le -c_x \|x-x_g\|^2, \label{eq:plant_bound_1} \\
        \left\| \frac{\partial V}{\partial g} \right\| & \le L_{xg} \|x-x_g\|. \label{eq:plant_bound_2}
    \end{align}
    Then, direct differentiation $\dot{V} = \frac{\partial V}{\partial x} f + \frac{\partial V}{\partial g}^\nlT \rho^\star$ and substitution of the bounds yield
    \begin{align}
        \dot{V}(x,g) \le -c_x \|x - x_g\|^2 + L_{xg} \|x-x_g\| \|\rho^\star\|
    \end{align}
    along the trajectories of the closed-loop system defined by \eqref{eq:augmented_system} and \eqref{eq:QP_solution}.
\end{lem}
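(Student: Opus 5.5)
The argument is a direct computation along the closed-loop trajectories of \eqref{eq:augmented_system} with $\dot g=\rho^\star(x,g)$ given by \eqref{eq:QP_solution}. The plan is to apply the chain rule to $V(x(t),g(t))$, bound each of the two resulting terms with one of the assumed inequalities, and add the bounds.

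\emph{Chain rule.} Since $V$ is $C^1$ in both arguments and $g$ evolves according to $\dot g=\rho^\star$, the chain rule gives
\begin{align}
\dot V(x,g)=\frac{\partial V}{\partial x}(x,g)\,f(x,\kappa(x,g))+\frac{\partial V}{\partial g}(x,g)^\nlT \rho^\star(x,g).
\end{align}
This requires $t\mapsto(x(t),g(t))$ to be absolutely continuous, so that $\dot V$ exists almost everywhere. That holds because $f$ and $\kappa$ are $C^1$ and $\rho^\star$ is locally Lipschitz wherever $\|a_H\|>0$, the $\max\{0,\cdot\}$ being Lipschitz. On the degenerate set $\|a_H\|=0$ the second branch of \eqref{eq:QP_solution} applies. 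Even there, the identity above holds pointwise as a derivative along the solution, which is all the statement needs.

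\emph{First term.} This is bounded directly by \eqref{eq:plant_bound_1}, giving $-c_x\|x-x_g\|^2$. The assumption is stated for the frozen-$g$ vector field, which is exactly the $x$-component of the augmented dynamics, so it applies at the current $g(t)$ without modification.

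\emph{Second term.} Cauchy--Schwarz gives $\frac{\partial V}{\partial g}^\nlT\rho^\star\le\|\frac{\partial V}{\partial g}\|\,\|\rho^\star\|$. Applying \eqref{eq:plant_bound_2} then yields $L_{xg}\|x-x_g\|\,\|\rho^\star\|$.

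\emph{Conclusion.} Summing the two bounds gives the claimed inequality.

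There is no real obstacle. The only points needing care are regularity and domain. First, the trajectory must stay in the region where the assumed bounds hold, i.e.\ within $D_g$. This follows because Proposition~\ref{prop:feasibility} guarantees feasibility of the constraint, so $H\ge0$ is maintained, which keeps $\Delta_i\ge0$ and hence $V\le(1-\epsilon)\bar\Gamma(g)$. Second, when $\|a_H\|=0$ the update is not necessarily continuous, in which case the inequality should be read almost everywhere in $t$.
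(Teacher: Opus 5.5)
Your proof is correct and follows the paper's argument: apply the chain rule along $\dot g=\rho^\star$, bound the first term with \eqref{eq:plant_bound_1}, and bound the second with Cauchy--Schwarz plus \eqref{eq:plant_bound_2}. Your extra remarks on regularity and on staying in $D_g$ are not needed, since the lemma takes the bounds and the closed-loop trajectories as given; also, your claim that $\rho^\star$ is locally Lipschitz would require $\nabla H$ and $\nabla_g V_g$ to be locally Lipschitz, which the paper's $C^1$ assumptions do not provide.
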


\begin{thm}[Safety and convergence to stationary points]
    \label{thm:main_short_revised}
    Consider the closed-loop system \eqref{eq:augmented_system} with the update law \eqref{eq:QP_solution}. 
    Assume the potential $V_g(\cdot,r)$ and the reference-dependent Lyapunov function $V(x,g)$ are radially unbounded.
    Assume the constraint functions $h_i(x,g)$ are continuously differentiable and that the aggregated barrier $H$ satisfies $\nabla_g H(x,g)\neq0$ on $\partial\mathcal C$.
    Further assume $V$ satisfies the conditions of Lemma~\ref{lem:plant_bound}.
    If $(x(0),g(0))\in\mathcal C$, then $H(x(t),g(t))\ge 0$ for all $t\ge0$, and every trajectory converges to the largest invariant subset of
    \begin{align}
        \mathcal Z \coloneqq \{(x,g)\in\mathcal C \mid x=x_g,\ \rho^\star(x,g)=0\}.        
    \end{align}
\end{thm}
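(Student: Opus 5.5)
The proof has two parts: forward invariance of $\mathcal C$, which I take to be $\{(x,g)\mid H(x,g)\ge0\}$, and convergence via a LaSalle-type argument on a composite Lyapunov function.

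\emph{Safety.} By Proposition~\ref{prop:feasibility}, the constraint set $\mathcal H$ of \eqref{eq:QP_reference_update} is nonempty whenever $H\ge0$, so $\rho^\star$ is well defined and satisfies $\dot H+\alpha_H(H)\ge0$ along \eqref{eq:augmented_system}. For existence and uniqueness of solutions we need $\rho^\star$ to be locally Lipschitz. The closed form \eqref{eq:QP_solution} is a composition of $\max\{0,\cdot\}$ with $C^1$ data divided by $\|a_H\|^2$. The assumption $\nabla_g H\neq0$ on $\partial\mathcal C$ keeps $\|a_H\|$ bounded away from zero in a neighborhood of the boundary. In the interior, when $\|a_H\|\to0$ we have $b_H\ge\alpha_H(H)>0$, so the max term vanishes locally. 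Hence $\rho^\star$ is locally Lipschitz near every point of $\mathcal C$. The standard comparison-lemma argument for CBFs (Nagumo) then gives $H(x(t),g(t))\ge0$ for all $t$ in the maximal interval of existence.

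\emph{Boundedness.} Lemma~\ref{lem:proj_grad} gives $\dot V_g=\nabla_gV_g^\nlT\rho^\star\le-\|\rho^\star\|^2\le0$. Since $V_g(\cdot,r)$ is radially unbounded, $g(t)$ stays in a compact sublevel set. For $x$, define the composite function $W(x,g)\coloneqq V(x,g)+\gamma V_g(g,r)$ with $\gamma>0$. By Lemma~\ref{lem:plant_bound} and Young's inequality,
\begin{align}
\dot W\le -c_x\|x-x_g\|^2+L_{xg}\|x-x_g\|\|\rho^\star\|-\gamma\|\rho^\star\|^2\le -\tfrac{c_x}{2}\|x-x_g\|^2-\big(\gamma-\tfrac{L_{xg}^2}{2c_x}\big)\|\rho^\star\|^2 .
\end{align}
Choosing $\gamma>L_{xg}^2/(2c_x)$ makes $\dot W\le -\mu(\|x-x_g\|^2+\|\rho^\star\|^2)$ for some $\mu>0$. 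The set $\{W\le W(0)\}$ is compact: $g$ is bounded, and for $g$ in a compact set the radial unboundedness of $V$ in $x$ bounds $x$. I expect to need uniformity of that radial unboundedness over compact $g$-sets, which follows from continuity of $V$, though this should be stated explicitly. Compactness gives global existence of solutions.

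\emph{Convergence.} The set $\mathcal C\cap\{W\le W(0)\}$ is compact and forward invariant, $W$ is nonincreasing on it, and $\dot W=0$ forces $x=x_g$ and $\rho^\star=0$, i.e.\ membership in $\mathcal Z$. LaSalle's invariance principle then gives convergence to the largest invariant subset of $\mathcal Z$.

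The main obstacle is regularity. LaSalle needs continuous dynamics (locally Lipschitz, for uniqueness of solutions), and $\rho^\star$ may lose regularity where $a_H=0$ inside $\mathcal C$ while the max term is active. I would first handle this via the interior argument above, which uses $b_H\ge\alpha_H(H)>0$ when $H>0$. If that fails, I would fall back to a Barbalat argument: integrate $\dot W$ to obtain $\|x-x_g\|,\|\rho^\star\|\in L^2$, and combine this with uniform continuity of the bounded trajectories.
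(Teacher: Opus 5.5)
Your proposal is correct and follows the paper's proof essentially step for step. Both use forward invariance of $\mathcal C$ from Proposition~\ref{prop:feasibility}, the composite function $W=V+cV_g$ bounded via Lemmas~\ref{lem:proj_grad} and~\ref{lem:plant_bound}, compactness from radial unboundedness, and LaSalle; your Young threshold $L_{xg}^2/(2c_x)$ is just more conservative than the paper's sharp $L_{xg}^2/(4c_x)$. Your local-Lipschitz discussion adds regularity the paper never addresses, though it tacitly needs $H\in C^{2}$ (or $C^{1,1}$) and $\nabla_g V_g$ locally Lipschitz, which the stated $C^1$ hypotheses do not give.

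One small correction: uniform radial unboundedness of $V(\cdot,g)$ over compact $g$-sets does not follow from continuity alone. For example, $V(x,g)=x^2/(1+x^4g^2)+g^2x^2$ stays below $2$ along $(x,g)=(k,1/k)$. So it must be assumed outright, as the paper's proof also tacitly does.
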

\begin{proof}
    By Proposition~\ref{prop:feasibility}, the QP constraint is always feasible on $\mathcal C$.
    Hence, the CBF condition \eqref{eq:H_CBF_condition} holds along the closed-loop trajectories,
    which implies forward invariance of $\mathcal C$.
    Therefore, $(x(t),g(t))\in\mathcal C$ for all $t\ge0$.

    Define the composite Lyapunov function
    \begin{align}
        W(x,g) \coloneqq V(x,g) + c V_g(g,r),        
    \end{align}
    with $c>0$ to be chosen.
    Using Lemmas~\ref{lem:proj_grad} and~\ref{lem:plant_bound}, we obtain
    \begin{align}
        \dot W
        \le
        -c_x \|x-x_g\|^2
        + L_{xg}\|x-x_g\|\|\rho^\star\|
        - c\|\rho^\star\|^2.        
    \end{align}
    Choosing $c>\frac{L_{xg}^2}{4c_x}$ renders the quadratic form in
    $(\|x-x_g\|,\|\rho^\star\|)$ negative definite.
    Hence, $\dot W \le 0$, with equality only if $x=x_g$ and $\rho^\star=0$.

    Therefore, $W(x(t),g(t))$ is nonincreasing and bounded below.
    Since both $V$ and $V_g$ are radially unbounded,
    boundedness of $W$ implies boundedness of $x(t)$ and $g(t)$.
    Thus, the trajectory remains in a compact forward-invariant set.

    Because $W$ is continuously differentiable and the trajectory remains in a compact set,
    LaSalle's invariance principle applies.
    Every trajectory converges to the largest invariant set contained in
    $\{\dot W=0\}$.
    From the characterization of $\dot W=0$,
    this set coincides with
    \begin{align}
        \mathcal Z
        =
        \{(x,g)\in\mathcal C \mid x=x_g,\ \rho^\star(x,g)=0\}.        
    \end{align}
    This concludes the proof.
\end{proof}

\begin{cor}
    Under the assumptions of Theorem~\ref{thm:main_short_revised}, assume in addition that:
    \begin{enumerate}
        \item Define $H_{\mathrm{ss}}(g) \coloneqq H(x_g, g)$ and the corresponding set $\mathcal V_H \coloneqq \{ g \mid H_{\mathrm{ss}}(g) \ge 0 \}$.
        Assume that $\mathcal V_H$ is convex and that $r \in \operatorname{int}\mathcal V_H$.
        \item The function $V_g(\cdot,r)$ is continuously differentiable and convex on $\mathcal V_H$, and admits $r$ as its unique minimizer over $\mathcal V_H$.
    \end{enumerate}
    Then every trajectory of the closed-loop system under \eqref{eq:QP_solution} satisfies $\lim_{t \to \infty} \|g(t)- r\|=0$ and $\lim_{t \to \infty} \|x(t)- x_r\| = 0$.
\end{cor}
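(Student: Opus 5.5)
The plan is to combine Theorem~\ref{thm:main_short_revised} with a characterization of the set $\mathcal Z$ under the extra convexity assumptions. By the theorem, every trajectory stays in $\mathcal C$, is bounded, and converges to the largest invariant subset $\mathcal M$ of $\mathcal Z=\{(x,g)\in\mathcal C \mid x=x_g,\ \rho^\star(x,g)=0\}$. It therefore suffices to show that $\mathcal Z \subseteq \{(x_r,r)\}$. Convergence of $g(t)$ to $r$ then follows, and continuity of $g\mapsto x_g$ gives $x(t)\to x_r$. Equivalently, $\|x-x_g\|\to0$ and $g\to r$ imply $x\to x_r$.

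The first step is to fix a point $(x_g,g)\in\mathcal Z$ and examine the QP \eqref{eq:QP_reference_update} there. Since $x=x_g$, the DSM terms are evaluated at the equilibrium, so $H(x_g,g)=H_{\mathrm{ss}}(g)\ge0$ and $g\in\mathcal V_H$. I would use the closed form \eqref{eq:QP_solution} together with the KKT conditions. If $\rho^\star=0$, then either $\nabla_g V_g(g,r)=0$, or the constraint is active with $\nabla_g V_g = -\lambda a_H = \lambda \nabla_g H$ for some $\lambda>0$, and $b_H=0$.

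In the first case, convexity of $V_g$ on $\mathcal V_H$ and uniqueness of the minimizer give $g=r$. The condition $r\in\operatorname{int}\mathcal V_H$ lets me treat $r$ as a genuine stationary point, so stationarity implies minimality. In the second case, $b_H=0$ requires $\alpha_H(H)=0$, because $\nabla_x H^\nlT f\ge0$ from the proof of Proposition~\ref{prop:feasibility} and at $x=x_g$ we have $f=0$. Hence $H_{\mathrm{ss}}(g)=0$, so $g\in\partial\mathcal V_H$, and $\nabla_g V_g(g,r)=\lambda\nabla_g H(x_g,g)$ with $\nabla_g H\neq0$ by assumption.

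The key step is then to relate $\nabla_g H(x_g,g)$ to the outward normal of $\mathcal V_H$. This is where I expect the main obstacle. The gradient $\nabla_g H$ is a partial derivative with $x$ frozen, whereas $\nabla H_{\mathrm{ss}}$ also includes $\nabla_x H\,\partial x_g/\partial g$. At $x=x_g$, however, the DSM terms satisfy $\nabla_x\Delta_i=-\nabla_x V(x_g,g)=0$, because $V(\cdot,g)$ attains its minimum at $x_g$. So I would argue that $\nabla_g H(x_g,g)=\nabla H_{\mathrm{ss}}(g)$, provided the $h_j(x_g,g)$ terms are differentiated consistently as functions of $g$.

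Given this identification, $\nabla H_{\mathrm{ss}}(g)$ is an inward normal to the convex set $\mathcal V_H$ at $g$. Convexity then gives $\nabla H_{\mathrm{ss}}(g)^\nlT(r-g)\ge0$, and since $r$ is interior and the gradient is nonzero, this inequality is in fact strict. Convexity of $V_g$ gives $V_g(r,r)\ge V_g(g,r)+\nabla_g V_g^\nlT(r-g)=V_g(g,r)+\lambda\nabla H_{\mathrm{ss}}^\nlT(r-g)>V_g(g,r)$. This contradicts the fact that $r$ is the unique minimizer. So the second case is impossible, and $\mathcal Z=\{(x_r,r)\}$, which completes the argument.

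If the gradient identification fails in general, I would fall back on the inward-normal argument applied directly to the $g$-section $\{g\mid H(x_g,g)\ge0\}$.
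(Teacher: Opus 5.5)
Your proposal is correct and follows essentially the paper's route. You reduce to showing $\mathcal Z\subseteq\{(x_r,r)\}$ and identify $\nabla_g H(x_g,g)=\nabla H_{\mathrm{ss}}(g)$ via $\nabla_x V(x_g,g)=0$. You then rule out $\rho^\star=0$ with $g\neq r$ using a stationary case and a boundary case, relying on the supporting-hyperplane property of the convex set $\mathcal V_H$ and first-order convexity of $V_g$.

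The only difference is cosmetic. You split cases by the KKT multiplier (zero, or positive with $H_{\mathrm{ss}}(g)=0$) rather than by $g\in\operatorname{int}\mathcal V_H$ versus $g\in\partial\mathcal V_H$. This is if anything slightly cleaner, since it avoids the paper's tacit step $g\in\operatorname{int}\mathcal V_H\Rightarrow H_{\mathrm{ss}}(g)>0$.
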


\begin{proof}
    By Theorem~\ref{thm:main_short_revised}, every trajectory converges to the largest invariant subset of $\mathcal{Z}$. 
    Consider any point $(x_g,g)\in\mathcal Z$. Evaluating the aggregated barrier at steady state yields $H(x_g,g)=H_{\mathrm{ss}}(g)$. 
    Because $x_g$ is the equilibrium corresponding to the constant reference $g$, it minimizes the reference-dependent Lyapunov function; hence, $\nabla_x V(x_g,g) = 0$. 
    Recalling that each DSM satisfies $\nabla_x \Delta_i(x,g) = -\nabla_x V(x,g)$ and that $H$ is constructed by a smooth softmin of the DSMs and the steady-state terms, we obtain $\nabla_x H(x_g,g) = 0$. 
    Therefore, by the chain rule, the total derivative is
    \begin{align}
        \nabla_g H_{\mathrm{ss}}(g) &=
        \Big(\frac{\partial x_g}{\partial g}\Big)^\nlT \nabla_x H(x_g,g)
        + \frac{\partial H}{\partial g}(x_g,g) \\
        &=
        \frac{\partial H}{\partial g}(x_g,g),
    \end{align}
    meaning the partial derivative $\partial_g H(x_g,g)$ and the full gradient $\nabla_g H_{\mathrm{ss}}(g)$ coincide at steady state. 
    Moreover, since $f(x_g,\kappa(x_g,g))=0$, the CBF constraint at $(x_g,g)$ reduces to the affine inequality
    $-\nabla_g H_{\mathrm{ss}}(g)^\nlT \rho \le \alpha_H(H_{\mathrm{ss}}(g))$. 
    By Proposition~\ref{prop:feasibility}, this feasible set contains the origin whenever $H_{\mathrm{ss}}(g)\ge0$.

    The optimizer $\rho^\star$ is the Euclidean projection of $-\nabla_g V_g(g,r)$ onto this convex feasible set. 
    Thus $\rho^\star=0$ holds if and only if $-\nabla_g V_g(g,r)$ belongs to its normal cone at $\rho=0$. 
    We now show that this can occur only when $g=r$. 
    Suppose, for contradiction, that $g\neq r$ and $\rho^\star=0$.

    \textbf{Case 1:} $g\in\operatorname{int}\mathcal V_H$. 
    Then $H_{\mathrm{ss}}(g)>0$, so $\alpha_H(H_{\mathrm{ss}}(g))>0$ and the origin $\rho=0$ lies in the strict interior of the feasible set.
    Hence, the normal cone at $\rho=0$ reduces to $\{0\}$, which implies $\nabla_g V_g(g,r) = 0$. 
    Since $V_g(\cdot,r)$ is convex on $\mathcal V_H$ and admits $r$ as its unique minimizer, this contradicts $g\neq r$.

    \textbf{Case 2:} $g\in\partial\mathcal V_H$. 
    Then $H_{\mathrm{ss}}(g)=0$, and the feasible set reduces to the half-space 
    $\{ \rho \mid -\nabla_g H_{\mathrm{ss}}(g)^\nlT \rho \le 0 \}$,
    which coincides with the tangent cone of $\mathcal V_H$ at $g$. 
    The condition $\rho^\star=0$ implies $-\nabla_g V_g(g,r)\in N_{\mathcal V_H}(g)$, meaning
    $-\nabla_g V_g(g,r)^\nlT (y-g) \le 0$ for all $y\in\mathcal V_H$. 
    Taking $y=r$ yields $\nabla_g V_g(g,r)^\nlT (r-g) \ge 0$.
    However, by the convexity of $V_g(\cdot,r)$ and the unique minimum at $r$, we have
    $V_g(r,r) \ge V_g(g,r) + \nabla_g V_g(g,r)^\nlT (r-g)$. 
    Since $V_g(r,r) < V_g(g,r)$, this requires $\nabla_g V_g(g,r)^\nlT (r-g) < 0$, which contradicts the previous inequality.

    Having excluded both cases, we conclude the only point in $\mathcal Z$ satisfying $\rho^\star=0$ is $g=r$; hence, $\mathcal Z=\{(x_r,r)\}$. 
    Since $\mathcal Z=\{(x_r,r)\}$, Theorem~\ref{thm:main_short_revised} implies
    $\lim_{t\to\infty}\|g(t)-r\| = 0$ and $\lim_{t\to\infty}\|x(t)-x_r\| = 0$.
\end{proof}

\subsection{Discussion and Comparison}

\subsubsection{Comparison with CBF and ERG frameworks}
The proposed ERG-CBF approach combines reference governing with CBF-based safety enforcement.
Unlike standard CBF-QP filters that directly constrain the plant input and may suffer infeasibility under tight actuation limits or conflicting barriers \cite{amesControlBarrierFunctions2019,chen2021}, the proposed method encodes hard input bounds into DSMs and regulates the auxiliary reference. 
Since the trivial update $\rho=0$ remains admissible within the safe set (Proposition~\ref{prop:feasibility}), recursive feasibility of the reference QP is structurally guaranteed.
Compared to conventional ERG schemes relying on heuristic navigation fields and repulsion shaping in nonconvex environments \cite{8412335}, the update direction is obtained via a one-step closed-form projection.
This least-deviation correction adapts automatically to the local constraint geometry while preserving the Lyapunov-based transient guarantees of ERG methods.

\subsubsection{Parameter selection and implementation}
When the safety constraint is inactive, the proposed update reduces to the nominal gradient flow $\dot g = -\nabla_g V_g(g,r)$.
In contrast, conventional ERG schemes typically scale the navigation field by a gain and the safety margin, which may result in significantly larger reference velocities depending on the tuning.
As a consequence, in the proposed framework, the transient convergence rate is directly determined by the scaling (or gradient magnitude) of the potential function $V_g$.
Therefore, to achieve comparable convergence speed, $V_g$ should be chosen sufficiently steep.
The governor behavior depends on the potential $V_g$, the softmin parameters $\beta_H$ (used in $H$) and $\beta_\Delta$ (used in the DSM smoothing), and the CBF gain $\alpha_H$.
Larger $\beta_H$ and $\beta_\Delta$ values yield tighter approximations to the exact minimum but may worsen numerical conditioning; moderate values (tens to low hundreds) are typically sufficient in practice.

\section{SIMULATION RESULTS}
\label{sec:sim}

To demonstrate the effectiveness of the proposed CBF-based reference governor on nonlinear systems, we apply it to an $n$-degree-of-freedom ($n$-DOF) planar robotic manipulator.
Let $q \in \nlR^n$ denote the joint angles, $\tau \in \nlR^n$ the applied joint torques, and $X \coloneqq [q^\nlT~\dot{q}^\nlT]^\nlT \in \nlR^{2n}$ the state.
Assuming horizontal planar motion where the effects of gravity are neglected, the dynamics are given by $M(q)\ddot{q} + C(q,\dot{q})\dot{q} = \tau$, where $M(q)$ is the inertia matrix and $C(q,\dot{q})$ represents the Coriolis terms \cite{spong2020robot}.

A joint-space PD controller $\tau = -K_P (q - g) - K_D \dot{q}$ steers the system toward the auxiliary reference $g \in \nlR^n$.
The closed-loop system admits the reference-dependent Lyapunov function $V(X,g) \coloneqq \frac{1}{2}\dot{q}^\nlT M(q) \dot{q} + \frac{1}{2}(q - g)^\nlT K_P (q - g)$.
Exploiting the skew-symmetry of $\dot{M}(q) - 2C(q,\dot{q})$ (see Chapter 6 in \cite{spong2020robot}), its time derivative satisfies $\dot{V}(X,g) = -\dot{q}^\nlT K_D \dot{q} \le 0$ for all $X \in \nlR^{2n}$.

For whole-arm collision avoidance against a circular obstacle at $p_o \in \nlR^2$ with radius $R$, we discretize the $k$-th link into $N_k$ points $p_{k,j}(g) \coloneqq P_{k-1}(g) + \frac{j}{N_k} ( P_k(g) - P_{k-1}(g) )$, where $P_k(g)$ is the $k$-th joint position.
The Euclidean distance to the obstacle is $d_{k,j}(g) \coloneqq \|p_{k,j}(g) - p_o\|$.
Using a softmin approximation with the parameter $\beta > 0$, the smooth steady-state distance is formulated as $\tilde{d}_{\mathrm{arm}}(g, p_o) \coloneqq -\frac{1}{\beta} \log \big( \sum_{k=1}^n \sum_{j=1}^{N_k} \exp(-\beta d_{k,j}(g)) \big)$, ensuring steady-state admissibility via $h_{\mathrm{arm}}(g) \coloneqq \tilde{d}_{\mathrm{arm}}(g, p_o) - R \ge 0$.

Exploiting the Lipschitz continuity of forward kinematics map---that transforms joint angles into position and orientation of the last link of the robot---, the spatial error of any point on the manipulator is bounded by $L_{\mathrm{max}} \|q - g\|$, where the Lipschitz constant is bounded by $L_{\mathrm{max}} \coloneqq \sqrt{\sum_{m=1}^n (\sum_{i=m}^n l_i)^2}$.
Since the joint error satisfies $\|q - g\| \le \sqrt{2V(X,g)/\lambda_{\min}(K_P)}$, the worst-case spatial error is rigorously confined by the Lyapunov energy.
The closed-form DSM for transient safety is $\Delta_{\mathrm{arm}}(X,g) \coloneqq \Gamma_{\mathrm{arm}}^\ast(g) - V(X,g)$, with $\Gamma_{\mathrm{arm}}^\ast(g) \coloneqq \frac{\lambda_{\min}(K_P)}{2 L_{\mathrm{max}}^2} ( \max\{0, \tilde{d}_{\mathrm{arm}}(g, p_o) - R\} )^2$, which is aggregated into the softmin barrier function $H(X,g)$ as per \eqref{eq:H_softmin_alt}.

\begin{figure*}[t]
    \centering
    \begin{subfigure}[b]{0.32\linewidth}
        \centering
        \includegraphics[width=\linewidth]{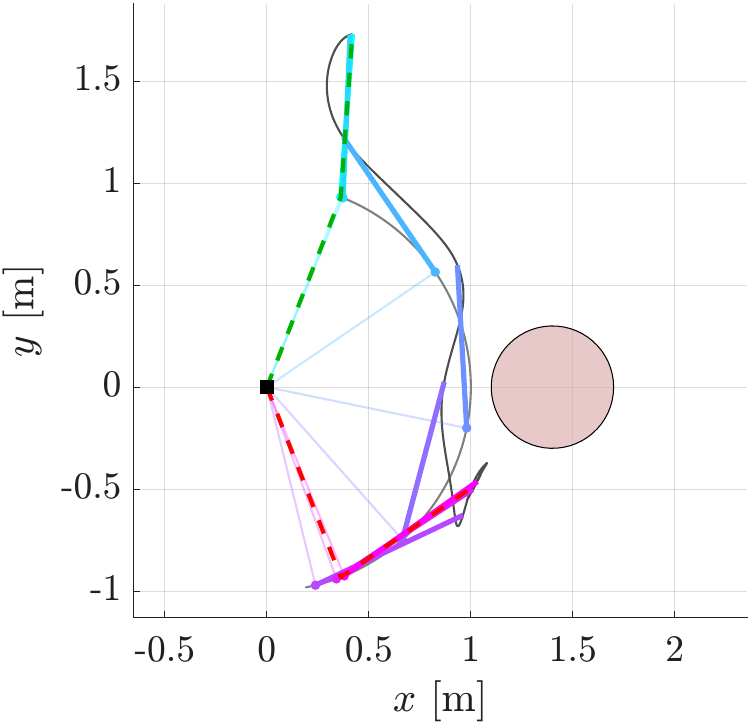}
        \caption{\footnotesize Workspace trajectory showing the manipulator safely avoiding the circular obstacle.}
        \label{fig:traj_workspace}
    \end{subfigure}
    \hfill
    \begin{subfigure}[b]{0.32\linewidth}
        \centering
        \includegraphics[width=\linewidth]{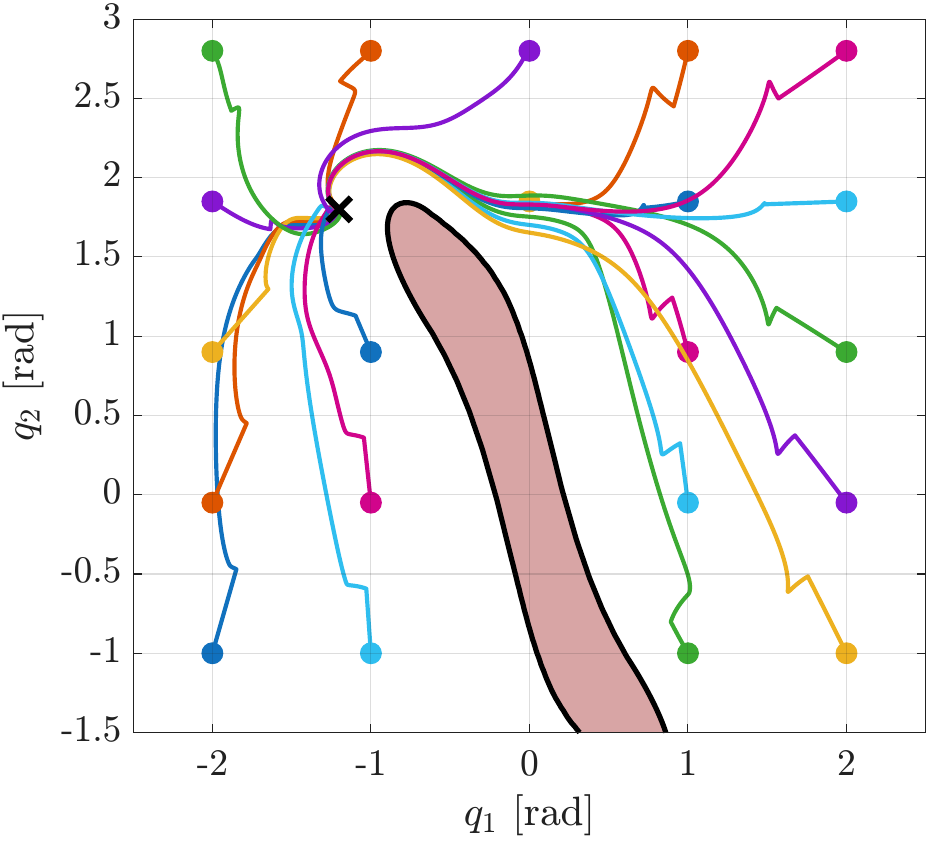}
        \caption{\footnotesize C-space trajectories from 20 initial configurations converging to the target.}
        \label{fig:traj_cspace}
    \end{subfigure}
    \hfill
    \begin{subfigure}[b]{0.32\linewidth}
        \centering
        \includegraphics[width=\linewidth]{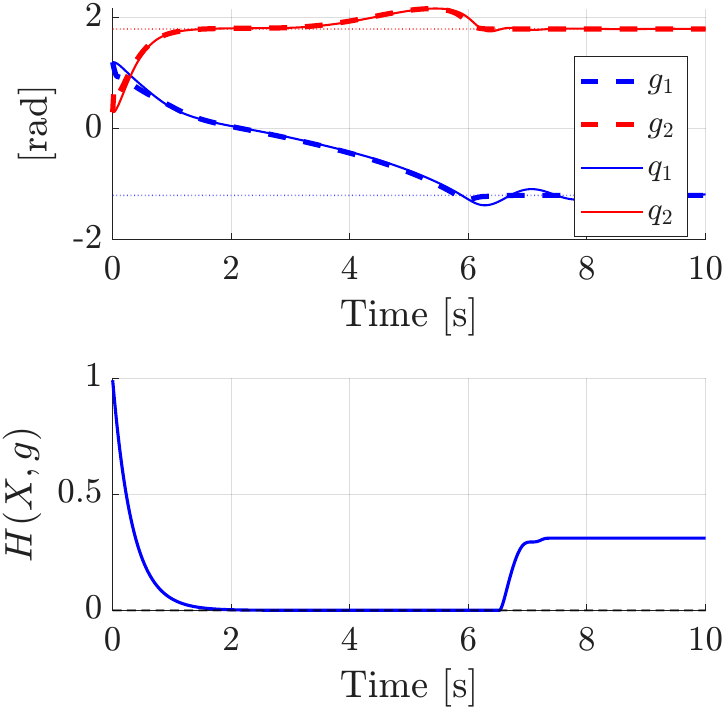}
        \caption{\footnotesize Time responses of the joint angles and the aggregated barrier function $H(X,g)$.}
        \label{fig:manipulator_time}
    \end{subfigure}
    \caption{\footnotesize Simulation results for the 2-DOF planar manipulator navigating around a static circular obstacle using the proposed ERG-CBF framework.}
    \label{fig:manipulator_results}
\end{figure*}

We simulate a $2$-DOF planar manipulator
($l_1 = \SI{1.0}{\meter}$, $l_2 = \SI{0.8}{\meter}$, $m_1 = \SI{2.0}{\kilogram}$, $m_2 = \SI{1.0}{\kilogram}$)
with a circular obstacle at
$p_o = [1.4~0]^\nlT~\si{\meter}$ and $R = \SI{0.30}{\meter}$.
The controller gains are $K_P = 50 I_2$, $K_D = 3 I_2$, and the ERG-CBF parameters are $P = 15 I_2$, $\alpha_H(s) = 3s$, $\beta = 100$, and $N_k =  5$.
In the numerical implementation, standard log-sum-exp scaling was applied to ensure numerical stability.

The results are shown in Fig.~\ref{fig:manipulator_results}.
Fig.~\ref{fig:traj_workspace} shows the workspace motion from the initial configuration $q(0) = [1.2~0.3]^\nlT$\,rad (green dashed) to the target (red dashed), where the gray curves trace the elbow and tip trajectories, and the colored lines depict the second-link snapshots, progressing from cyan (initial) to magenta (final) over time.
The proposed method drives the manipulator to the target without any part of the arm contacting the obstacle (shaded circle).
In Fig.~\ref{fig:traj_cspace}, we report the results of 20 simulations obtained varying the initial configuration of the robot. Each curve represents the governor trajectory $g(t)$ starting from a distinct initial configuration ($\bullet$), all converging to the target $r$ ($\times$); the shaded region denotes the obstacle in the configuration space (typically referred to as C-space \cite{spong2020robot}).
Using the closed-form projection along $\nabla_g H$, the governor automatically adapts the update direction to the local constraint geometry, eliminating the need for manual navigation-field design.
Fig.~\ref{fig:manipulator_time} shows the time responses of the joint angles and the aggregated barrier $H(X,g)$, confirming constraint satisfaction throughout the motion.

\section{CONCLUSION}
\label{sec:conclusion}
In this letter, we proposed a novel Explicit Reference Governor framework that integrates Control Barrier Functions with Dynamic Safety Margins to handle multiple state and input constraints.
By employing the softmin function, we aggregated multiple constraints into a single smooth barrier function, enabling the derivation of a closed-form, optimization-based reference update law.
A key feature of the proposed method is the structural guarantee of recursive feasibility, which addresses a common limitation of standard CBF-QP approaches.
Numerical simulations on an $n$-DOF planar robotic manipulator demonstrated that the proposed framework effectively ensures safety and convergence in the presence of obstacles and input saturation, automatically adapting the reference trajectory without the need for manual and heuristic design of navigation fields.

\end{document}